\DeclareMathOperator{\Hom}{Hom}
\DeclareMathOperator{\Com}{Com}
\DeclareMathOperator{\II}{I}
\DeclareMathOperator{\1}{id}
\newcommand{\NN}{\mathbb{N}}
\newcommand{\RR}{\mathbb{R}}
\newcommand{\EEnd}{\mathcal End}
\newcommand{\EE}{\mathcal E}
\newcommand{\bul}{\bullet}
\renewcommand{\=}{\doteq}
\renewcommand{\t}{\otimes}
\newcommand{\al}{\alpha}
\newcommand{\be}{\beta}
\newtheorem{thm}{Theorem}[section]
 \newtheorem{lemma}[thm]{Lemma}
\theoremstyle{definition}
 \newtheorem{defn}[thm]{Definition}
\theoremstyle{definition}
 \newtheorem{exam}[thm]{Example}
\theoremstyle{definition}
\begin{document}

\thispagestyle{empty}

\title{\LARGE\bf Note on 2d binary operadic \\ harmonic oscillator}
\author{Eugen Paal and J\"{u}ri Virkepu}
\author{\Large Eugen Paal and J\"{u}ri Virkepu\\ \\
Department of Mathematics, Tallinn University of Technology\\
Ehitajate tee 5, 19086 Tallinn, Estonia\\ \\
E-mails: eugen.paal@ttu.ee and jvirkepu@staff.ttu.ee}
\date{}

\maketitle
\thispagestyle{empty}

\begin{abstract}
It is explained how the time evolution of the operadic variables may be introduced. As an example, a $2$-dimensional binary operadic Lax representation of the  harmonic oscillator  is found.
\par\smallskip
{\bf 2000 MSC:} 18D50, 70G60
\par
{\bf Keywords:} Operad, harmonic oscillator, operadic Lax pair
\end{abstract}

\section{Introduction}

It is well known that quantum mechanical observables are linear \emph{operators}, i.e the linear maps $V\to V$ of a vector space $V$ and their time evolution is given by the Heisenberg equation. As a variation of this one can pose the following question \cite{Paal07}: how to describe the time evolution of the  linear algebraic operations (multiplications) $V^{\t n}\to V$. The algebraic operations (multiplications) can be seen as an example of the \emph{operadic} variables \cite{Ger,GGS92,KP,KPS}.

When an operadic system depends on time one can speak about \emph{operadic dynamics} \cite{Paal07}. 
The latter may be introduced by simple and natural analogy with the Hamiltonian dynamics.
In particular, the time evolution of operadic variables may be given by operadic Lax equation.
In \cite{PV07} it was shown how the dynamics may be introduced in 2d Lie algebra. 
In the present paper, an operadic Lax representation for harmonic oscillator is constructed
in general 2d binary algebras.

\section{Operad}

Let $K$ be a unital associative commutative ring, and let $C^n$ ($n\in\NN$) be unital $K$-modules.
For $f\in C^n$, we refer to $n$ as the \emph{degree} of $f$ and often write
(when it does not cause confusion) $f$ instead of $\deg f$. For
example, $(-1)^f\=(-1)^n$, $C^f\=C^n$ and $\circ_f\=\circ_n$. Also, it
is convenient to use the \emph{reduced} degree $|f|\=n-1$.
Throughout this paper, we assume that $\t\=\t_K$.

\begin{defn}[operad (e.g \cite{Ger,GGS92})]
A linear (non-symmetric) \emph{operad} with coefficients in $K$ is a sequence $C\=\{C^n\}_{n\in\NN}$ of unital
$K$-modules (an $\NN$-graded $K$-module), such that the following
conditions are held to be true.
\begin{enumerate}
\item[(1)]
For $0\leq i\leq m-1$ there exist \emph{partial compositions}
\[
  \circ_i\in\Hom(C^m\t C^n,C^{m+n-1}),\qquad |\circ_i|=0
\]
\item[(2)]
For all $h\t f\t g\in C^h\t C^f\t C^g$,
the \emph{composition (associativity) relations} hold,
\[
(h\circ_i f)\circ_j g=
\begin{cases}
    (-1)^{|f||g|} (h\circ_j g)\circ_{i+|g|}f
                       &\text{if $0\leq j\leq i-1$},\\
    h\circ_i(f\circ_{j-i}g)  &\text{if $i\leq j\leq i+|f|$},\\
    (-1)^{|f||g|}(h\circ_{j-|f|}g)\circ_i f
                       &\text{if $i+f\leq j\leq|h|+|f|$}.
\end{cases}
\]
\item[(3)]
Unit $\II\in C^1$ exists such that
\[
\II\circ_0 f=f=f\circ_i \II,\qquad 0\leq i\leq |f|
\]
\end{enumerate}
\end{defn}

In the second item, the \emph{first} and \emph{third} parts of the
defining relations turn out to be equivalent.

\begin{exam}[endomorphism operad \cite{Ger}]
\label{HG} Let $V$ be a unital $K$-module and
$\EE_V^n\={\EEnd}_V^n\=\Hom(V^{\t n},V)$. Define the partial compositions
for $f\t g\in\EE_V^f\t\EE_V^g$ as
\[
f\circ_i g\=(-1)^{i|g|}f\circ(\1_V^{\t i}\t g\t\1_V^{\t(|f|-i)}),
         \qquad 0\leq i\leq |f|
\]
Then $\EE_V\=\{\EE_V^n\}_{n\in\NN}$ is an operad (with the unit $\1_V\in\EE_V^1$) called the
\emph{endomorphism operad} of $V$.

Therefore, algebraic operations can be seen as elements of an endomorphism operad.
\end{exam}

Just as elements of a vector space are called \emph{vectors},  it is natural to call elements of an abstract operad \emph{operations}. The endomorphism operads can be seen as the most suitable objects for modelling operadic systems.

\section{Gerstenhaber brackets and operadic Lax pair}

\begin{defn}[total composition \cite{Ger,GGS92}]
The \emph{total composition} $\bul\:C^f\t C^g\to C^{f+|g|}$ is defined by
\[
f\bul g\=\sum_{i=0}^{|f|}f\circ_i g\in C^{f+|g|},
\qquad |\bul|=0
\]
The pair $\Com C\=\{C,\bul\}$ is called the \emph{composition algebra} of $C$.
\end{defn}

\begin{defn}[Gerstenhaber brackets \cite{Ger,GGS92}]
The  \emph{Gerstenhaber brackets} $[\cdot,\cdot]$ are defined in $\Com C$ as a graded commutator by
\[
[f,g]\=f\bul g-(-1)^{|f||g|}g\bul f=-(-1)^{|f||g|}[g,f],\qquad|[\cdot,\cdot]|=0
\]
\end{defn}

The \emph{commutator algebra} of $\Com C$ is denoted as $\Com^{-}\!C\=\{C,[\cdot,\cdot]\}$.
One can prove that $\Com^-\!C$ is a \emph{graded Lie algebra}. The Jacobi
identity reads
\[
(-1)^{|f||h|}[[f,g],h]+(-1)^{|g||f|}[[g,h],f]+(-1)^{|h||g|}[[h,f],g]=0
\]

Assume that $K\=\RR$ and operations are differentiable.
The dynamics in operadic systems (operadic dynamics) may be introduced by the

\begin{defn}[operadic Lax pair \cite{Paal07}]
Allow a classical dynamical system to be described by the evolution equations
\[
\dfrac{dx_i}{dt}=f_i(x_1,\dots,x_n),\quad i=1,\dots,n
\]
An \emph{operadic Lax pair} is a pair $(L,M)$ of homogeneous operations $L,M\in C$,
such that the above system of evolution equations is equivalent to the
\emph{operadic Lax equation}
\[
\dfrac{dL}{dt}=[M,L]\=M\bul L-(-1)^{|M||L|}L\bul M
\]
Evidently, the degree constraint $|M|=0$ gives rise to ordinary Lax pair \cite{Lax68,BBT03}.
\end{defn}

\section{Operadic harmonic oscillator}

Consider the Lax pair for the harmonic oscillator:
\[
L=\begin{pmatrix}
p&\omega q\\
\omega q &-p
\end{pmatrix},
\qquad
M=\frac{\omega}{2}
\begin{pmatrix}
0&-1\\
1&0
\end{pmatrix}
\]
Since the Hamiltonian is
\[
H(q,p)=\frac{1}{2}(p^2+\omega^2q^2)
\]
it is easy to check that the Lax equation
\[
\dot{L}=[M,L]\= ML - LM
\]
is equivalent to the Hamiltonian system
\[
\dfrac{dq}{dt}=\dfrac{\partial H}{\partial p}=p,
\quad
\dfrac{dp}{dt}=-\dfrac{\partial H}{\partial q}=-\omega^2q
\]
If $\mu$ is a homogeneous operadic variable one can use the above Hamilton's equations to obtain
\[
\dfrac{d\mu}{dt}
=\dfrac{\partial\mu}{\partial q}\dfrac{dq}{dt}+\dfrac{\partial\mu}{\partial p}\dfrac{dp}{dt}
=p\dfrac{\partial\mu}{\partial q}-\omega^2q\dfrac{\partial\mu}{\partial p}
\]
Therefore, the linear partial differential equation for the operadic variable $\mu(q,p)$ reads
\[
p\dfrac{\partial\mu}{\partial q}-\omega^2q\dfrac{\partial\mu}{\partial p}=M\bul\mu- \mu\bul M
\]
By integrating one gains sequences of operations  called the \emph{operadic (Lax representations of) harmonic oscillator}.

\section{Example}

Let $A\=\{V,\mu\}$ be a  binary algebra with operation $xy\=\mu(x\t y)$.
We require that $\mu=\mu(q,p)$ so that $(\mu,M)$ is an operadic Lax pair, i.e the operadic Lax equation
\[
\dot{\mu}=[M,\mu]\= M\bul\mu-\mu\bul M,\qquad |\mu|=1,\quad |M|=0
\]
is equivalent to the Hamiltonian system of the harmonic oscillator.

Let $x,y\in V$. By assuming that $|M|=0$ and $|\mu|=1$, one has
\begin{align*}
M\bul\mu
&=\sum_{i=0}^0(-1)^{i|\mu|}M\circ_i\mu
=M\circ_0\mu=M\circ\mu\\
\mu\bul M &=\sum_{i=0}^1(-1)^{i|M|}\mu\circ_i M =\mu\circ_0
M+\mu\circ_1 M=\mu\circ(M\t\1_V)+\mu\circ(\1_V\t M)
\end{align*}
Therefore, one has
\[
\dfrac{d}{dt}(xy)=M(xy)-(Mx)y-x(My)
\]
Let $\dim V=n$.
In a basis $\{e_1,\ldots,e_n\}$ of $V$,  the structure constants $\mu_{jk}^i$ of $A$ are defined by
\[
\mu(e_j\t e_k)\= \mu_{jk}^i e_i,\qquad j,k=1,\ldots,n
\]
In particular,
\[
\dfrac{d}{dt}(e_je_k)=M(e_je_k)-(Me_j)e_k-e_j(Me_k)
\]
By denoting $Me_i\= M_i^se_s$, it follows that
\[
\dot{\mu}_{jk}^i=\mu_{jk}^sM_s^i-M_j^s\mu_{sk}^i-M_k^s\mu_{js}^i,\qquad i,j,k=1,\ldots, n
\]
In particular, one has
\begin{lemma}
\label{lemma:first}
Let $\dim V=2$ and 
$
M\=(M_j^i)\=
\frac{\omega}{2}
\left(
\begin{smallmatrix}
0&-1\\
1&0
\end{smallmatrix}
\right)
$.
Then the $2$-dimensional binary operadic Lax equations read
\[
\begin{cases}
\dot{\mu}_{11}^{1}=-\frac{\omega}{2}\left(\mu_{11}^{2}+\mu_{21}^{1}+\mu_{12}^{1}\right),\qquad
\dot{\mu}_{11}^{2}=\frac{\omega}{2}\left(\mu_{11}^{1}-\mu_{21}^{2}-\mu_{12}^{2}\right)\\
\dot{\mu}_{12}^{1}=-\frac{\omega}{2}\left(\mu_{12}^{2}+\mu_{22}^{1}-\mu_{11}^{1}\right),\qquad
\dot{\mu}_{12}^{2}=\frac{\omega}{2}\left(\mu_{12}^{1}-\mu_{22}^{2}+\mu_{11}^{2}\right)\\
\dot{\mu}_{21}^{1}=-\frac{\omega}{2}\left(\mu_{21}^{2}-\mu_{11}^{1}+\mu_{22}^{1}\right),\qquad
\dot{\mu}_{21}^{2}=\frac{\omega}{2}\left(\mu_{21}^{1}+\mu_{11}^{2}-\mu_{22}^{2}\right)\\
\dot{\mu}_{22}^{1}=-\frac{\omega}{2}\left(\mu_{22}^{2}-\mu_{12}^{1}-\mu_{21}^{1}\right),\qquad
\dot{\mu}_{22}^{2}=\frac{\omega}{2}\left(\mu_{22}^{1}+\mu_{12}^{2}+\mu_{21}^{2}\right)\\
\end{cases}
\]
\end{lemma}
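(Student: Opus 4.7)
The plan is to simply specialize the general structure-constant formula
\[
\dot{\mu}_{jk}^{i}=\mu_{jk}^{s}M_{s}^{i}-M_{j}^{s}\mu_{sk}^{i}-M_{k}^{s}\mu_{js}^{i}
\]
derived immediately above the lemma to $n=2$ together with the given matrix $M$. Since this formula is already proved, nothing conceptual remains; the lemma is a direct verification.

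First I would read off the four entries of $M=(M_j^i)$ from its matrix form, namely $M_1^1=M_2^2=0$, $M_2^1=-\tfrac{\omega}{2}$, $M_1^2=\tfrac{\omega}{2}$, and note that only two of them are nonzero. Consequently each of the three sums over $s\in\{1,2\}$ in the formula collapses to a single surviving term, which makes the bookkeeping light.

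Next I would run through the eight triples $(i,j,k)\in\{1,2\}^3$. For instance, setting $(i,j,k)=(1,1,1)$, the term $\mu_{11}^{s}M_s^{1}$ retains only $s=2$ and contributes $-\tfrac{\omega}{2}\mu_{11}^{2}$; the term $M_1^{s}\mu_{s1}^{1}$ retains only $s=2$ and contributes $-\tfrac{\omega}{2}\mu_{21}^{1}$; similarly $M_1^{s}\mu_{1s}^{1}$ contributes $-\tfrac{\omega}{2}\mu_{12}^{1}$. Collecting yields the first equation $\dot{\mu}_{11}^{1}=-\tfrac{\omega}{2}(\mu_{11}^{2}+\mu_{21}^{1}+\mu_{12}^{1})$. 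The pattern is that lower indices equal to $1$ select $s=2$ (contributing $+\tfrac{\omega}{2}$) and vice versa, while the upper index behaves oppositely; this regularity makes the remaining seven identities immediate.

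The only potential obstacle is purely clerical: keeping upper and lower indices distinct and tracking the sign flip between $M_2^1$ and $M_1^2$ across the three different index positions ($s$ in the output slot versus the two input slots). A compact way to avoid mistakes is to tabulate, for each fixed $(j,k)$, the two equations for $i=1$ and $i=2$ in parallel; the antisymmetry of $M$ then makes one equation the ``conjugate'' of the other, which provides a built-in consistency check against the statement.
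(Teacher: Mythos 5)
Your proposal is correct and coincides with the paper's own (implicit) argument: the lemma is presented there as an immediate specialization of the formula $\dot{\mu}_{jk}^{i}=\mu_{jk}^{s}M_{s}^{i}-M_{j}^{s}\mu_{sk}^{i}-M_{k}^{s}\mu_{js}^{i}$ to $n=2$ with the given antisymmetric $M$, exactly as you do. Your sample computation for $(i,j,k)=(1,1,1)$ matches the first displayed equation, and the remaining seven cases follow by the same index bookkeeping.
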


For the harmonic oscillator, define its auxiliary functions $A_\pm$ and $D_\pm$ by
\[
\begin{cases}
A_+^2+A_-^2=2\sqrt{2H}\\
A_+^2-A_-^2=2p\\
A_+A_-=\omega q\\
\end{cases},\qquad
\begin{cases}
D_+\doteq \frac{A_+}{2}(A_+^2-3A_-^2)\\
D_-\doteq \frac{A_-}{2}(3A_+^2-A_-^2)\\
\end{cases}
\]
Then one has the following
\begin{thm}
Let $C_{\be}\in\mathbb{R}$ ($\be=1,\ldots,8$) be arbitrary real--valued parameters,
$
M\=
\frac{\omega}{2}
\left(
\begin{smallmatrix}
0&-1\\
1&0
\end{smallmatrix}
\right)
$
and
\[
\begin{cases}
\mu_{11}^{1}(q,p)=C_5A_-+C_6A_++C_7D_-+C_8D_+\\
\mu_{12}^{1}(q,p)=C_1A_++C_2A_--C_7D_++C_8D_-\\
\mu_{21}^{1}(q,p)=-C_1A_+-C_2A_--C_3A_+-C_4A_--C_5A_++C_6A_--C_7D_++C_8D_-\\
\mu_{22}^{1}(q,p)=-C_3A_-+C_4A_+-C_7D_--C_8D_+\\
\mu_{11}^{2}(q,p)=C_3A_++C_4A_--C_7D_++C_8D_-\\
\mu_{12}^{2}(q,p)=C_1A_--C_2A_++C_3A_--C_4A_++C_5A_-+C_6A_+-C_7D_--C_8D_+\\
\mu_{21}^{2}(q,p)=-C_1A_-+C_2A_+-C_7D_--C_8D_+\\
\mu_{22}^{2}(q,p)=-C_5A_++C_6A_-+C_7D_+-C_8D_-\\
\end{cases}
\]
Then $(\mu,M)$ is a $2$-dimensional binary operadic Lax pair of the harmonic oscillator.
\end{thm}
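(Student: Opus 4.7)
\medskip
\noindent\textbf{Proof proposal.}
The plan is to reduce everything to the time derivatives of the four auxiliary functions $A_\pm, D_\pm$, substitute the proposed $\mu_{jk}^i$ into the eight scalar equations of Lemma~\ref{lemma:first}, and match coefficients of $A_\pm, D_\pm$ on the two sides. All differentiation happens along the harmonic oscillator flow $\dot q=p$, $\dot p=-\omega^{2}q$, so $H=\tfrac12(p^{2}+\omega^{2}q^{2})$, and hence $A_{+}^{2}+A_{-}^{2}=2\sqrt{2H}$, is conserved.

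First I would establish the key lemma that $A_\pm$ rotate at angular speed $\omega/2$:
\[
\dot A_+=-\tfrac{\omega}{2}A_-, \qquad \dot A_-=\tfrac{\omega}{2}A_+.
\]
This drops out of the defining relations almost immediately: conservation of $A_+^2+A_-^2$ gives $A_+\dot A_+=-A_-\dot A_-$, while differentiating $A_+^2-A_-^2=2p$ gives $A_+\dot A_+-A_-\dot A_-=-\omega^2 q$; solving the $2{\times}2$ system and using $A_+A_-=\omega q$ yields the stated formulas. Next I would differentiate $D_\pm$ by the Leibniz rule and substitute $\dot A_\pm$; the polynomial identities $A_+^3-3A_+A_-^2$ and $3A_+^2A_--A_-^3$ are the classical triple--angle expressions, so the cleanest way to record the outcome is
\[
\dot D_+=-\tfrac{3\omega}{2}D_-, \qquad \dot D_-=\tfrac{3\omega}{2}D_+,
\]
which can also be read off by writing $A_+=r\cos\theta$, $A_-=r\sin\theta$ with $\dot\theta=\omega/2$, so that $D_\pm=\tfrac{r^{3}}{2}(\cos 3\theta,\sin 3\theta)$.

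With these four rules in hand the verification of the theorem becomes linear algebra in the basis $\{A_+,A_-,D_+,D_-\}$. For each of the eight equations in Lemma~\ref{lemma:first}, I would compute the left--hand side $\dot\mu_{jk}^{i}$ by applying the above derivatives to the stated expression for $\mu_{jk}^{i}$, and on the right--hand side I would substitute the expressions for the three $\mu$'s appearing in the bracket and collect coefficients. As an illustrative check, for $\mu_{11}^{1}=C_5A_-+C_6A_++C_7D_-+C_8D_+$ the left side equals $\tfrac{\omega}{2}\bigl(C_5A_+-C_6A_-+3C_7D_+-3C_8D_-\bigr)$, while $\mu_{11}^{2}+\mu_{21}^{1}+\mu_{12}^{1}$ collapses to $-C_5A_++C_6A_--3C_7D_++3C_8D_-$ after the $C_1,C_2,C_3,C_4$ contributions from $\mu_{21}^{1}$ cancel pairwise against the corresponding terms in $\mu_{12}^{1}$ and $\mu_{11}^{2}$, so the two sides agree. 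The same cancellation pattern underlies the other seven equations: in each equation the coefficients of $C_1,\dots,C_4$ telescope to zero, while those of $C_5,\dots,C_8$ reproduce exactly the required $\dot A_\pm$ or $\dot D_\pm$ prefactor.

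The only real obstacle is bookkeeping --- one must be careful about the sign and index pattern with which each $C_\beta$ enters the eight expressions for $\mu_{jk}^{i}$, since a single misplaced sign would break the cancellation. Organising the computation as an $8\times 4$ matrix (rows indexed by $(jk,i)$ and columns by the basis $A_+,A_-,D_+,D_-$), writing a second $8\times 4$ matrix for the right--hand side, and checking the identity row by row makes the argument mechanical, and this is the form I would adopt to conclude that $(\mu,M)$ is a 2d binary operadic Lax pair of the harmonic oscillator.
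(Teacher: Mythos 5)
Your computational core is correct and is in substance the same as the paper's: the identities $\dot A_{\pm}=\mp\tfrac{\omega}{2}A_{\mp}$ and $\dot D_{\pm}=\mp\tfrac{3\omega}{2}D_{\mp}$ that you derive are exactly the statements $G_{\pm}^{\omega/2}=0$ and $G_{\pm}^{3\omega/2}=0$ in the paper's notation, and your coefficient--matching in the basis $\{A_+,A_-,D_+,D_-\}$ (including the sample check of the $\mu_{11}^{1}$ equation) reproduces the content of the paper's matrix $\Gamma$ row by row. That part of the work is sound.

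There is, however, a gap in the logical direction. By the paper's definition, an operadic Lax pair requires the operadic Lax equation to be \emph{equivalent} to the Hamiltonian system, whereas your argument fixes the flow $\dot q=p$, $\dot p=-\omega^{2}q$ at the outset and only verifies the implication ``Hamiltonian $\Rightarrow$ Lax.'' The converse is not automatic and genuinely uses the arbitrariness of the $C_\beta$, which you never invoke: for a degenerate choice such as $C_\beta=0$ the eight Lax equations read $0=0$ and constrain $q(t),p(t)$ not at all. The paper closes this by observing that, after substitution, the Lax system takes the form $C_\beta\Gamma_\al^\be=0$ with $\Gamma_\al^\be$ built from the quantities $G_{\pm}^{\omega/2},G_{\pm}^{3\omega/2}$; arbitrariness of the $C_\beta$ then forces $\Gamma=0$, and one checks that $G_{\pm}^{\omega/2}=0$ is equivalent to $\dot q=p$, $\dot p=-\omega^{2}q$ (e.g.\ $\dot p=A_+\dot A_+-A_-\dot A_-=-\omega A_+A_-=-\omega^2 q$ and $\omega\dot q=\dot A_+A_-+A_+\dot A_-=\tfrac{\omega}{2}(A_+^2-A_-^2)=\omega p$). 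Your computations are all reversible, so the fix is short, but as written the equivalence demanded by the theorem is not established.
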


\begin{proof}[Idea of proof]
Denote
\[
\begin{cases}
G_{\pm}^{\omega/2}&\doteq \dot{A}_{\pm}\pm\frac{\omega}{2}A_{\mp}\\
G_{\pm}^{3\omega/2}&\doteq \dot{D}_{\pm}\pm\frac{3\omega}{2}D_{\mp}\\
\end{cases}
\]
Define the matrix
\[
\Gamma
=(\Gamma_{\al}^{\be})\doteq\begin{pmatrix}
                                   0 & \hphantom{-}G_+^{\omega/2} & -G_+^{\omega/2} & 0 & 0 & \hphantom{-}G_-^{\omega/2} & -G_-^{\omega/2} & 0 \\
                                   0 & \hphantom{-}G_-^{\omega/2} & -G_-^{\omega/2} & 0 & 0 & -G_+^{\omega/2} & \hphantom{-}G_+^{\omega/2} & 0 \\
                                   0 & 0 & -G_+^{\omega/2} & -G_-^{\omega/2} & \hphantom{-}G_+^{\omega/2} & \hphantom{-}G_-^{\omega/2} & 0 & 0 \\
                                   0 & 0 & -G_-^{\omega/2} & \hphantom{-}G_+^{\omega/2} & \hphantom{-}G_-^{\omega/2} & -G_+^{\omega/2} & 0 & 0 \\
                                   G_-^{\omega/2} & 0 & -G_+^{\omega/2} & 0 & 0 & \hphantom{-}G_-^{\omega/2} & 0 & -G_+^{\omega/2} \\
                                   G_+^{\omega/2} & 0 & \hphantom{-}G_-^{\omega/2} & 0 & 0 & \hphantom{-}G_+^{\omega/2} & 0 & \hphantom{-}G_-^{\omega/2} \\
                                   G_-^{3\omega/2} & -G_+^{3\omega/2} & -G_+^{3\omega/2} & -G_-^{3\omega/2} & -G_+^{3\omega/2} & -G_-^{3\omega/2} & -G_-^{3\omega/2} & \hphantom{-}G_+^{3\omega/2} \\
                                   G_+^{3\omega/2} & \hphantom{-}G_-^{3\omega/2} & \hphantom{-}G_-^{3\omega/2} & -G_+^{3\omega/2} & \hphantom{-}G_-^{3\omega/2} & -G_+^{3\omega/2} & -G_+^{3\omega/2} & -G_-^{3\omega/2} \\
                                 \end{pmatrix}
\]
Then it follows from Lemma \ref{lemma:first} that the $2$-dimensional binary operadic Lax equations read
\[
C_{\be}\Gamma_{\al}^{\be}=0,\qquad \al=1,\ldots,8
\]
Since the parameters $C_\be$ are arbitrary, the latter constraints imply  $\Gamma=0$.
Thus one has to consider the following differential equations
\[
G_{\pm}^{\omega/2}=0=G_{\pm}^{3\omega/2}
\]
By direct calculations one can show that
\[
G_{\pm}^{\omega/2}=0
\qquad \Longleftrightarrow\qquad
\begin{cases}
\dot{p}=-\omega^{2}q\\
\dot{q}=p\\
\end{cases}\qquad \Longleftrightarrow\qquad
G_{\pm}^{3\omega/2}=0
\tag*{\qed}
\]
\renewcommand{\qed}{}
\end{proof}

\section*{Acknowledgement}
The research was in part supported by the Estonian Science Foundation, Grant 6912.
More expanded version of the present paper will be published in \cite{PV08}.

\end{document}